\newtheorem{theorem}{Theorem}
\newtheorem{remark}{Remark}
\newtheorem{definition}{Definition}
\newtheorem{conjecture}{Conjecture}
\newtheorem{claim}{Claim}
\newenvironment{inproof}[1]{\noindent {\bf Proof of Claim #1.}}{\hfill$\blacksquare$ \medskip}
\newcommand{\occ}{\operatorname{occ}}
\newcommand{\bx}{{\bar x}}
\newcommand{\by}{{\bar y}}
\begin{document}
\bibliographystyle{plainnat}

\title[Tight Euler tours in hypergraphs]{Tight Euler tours in uniform hypergraphs -- computational aspects}

\author{Zbigniew Lonc \and Paweł Naroski \and Paweł Rzążewski}
\affiliation{Faculty of Mathematics and Information Science, Warsaw University of Technology, Poland}

\keywords{tight Euler tour, Eulerian hypergraphs}
\received{2017-7-2}
%\revised{} no revision recorded
\accepted{2017-8-12}
\publicationdetails{19}{2017}{3}{2}{3755}
\maketitle

%--------------------------------

\begin{abstract}
By a \emph{tight tour} in a $k$-uniform hypergraph $H$ we mean any sequence of its vertices $(w_0,w_1,\ldots,w_{s-1})$ such that for all $i=0,\ldots,s-1$ the set $e_i=\{w_i,w_{i+1}\ldots,w_{i+k-1}\}$ is an edge of $H$ (where operations on indices are computed modulo $s$) and the sets $e_i$ for $i=0,\ldots,s-1$ are pairwise different. A tight tour in $H$ is a \textit{tight Euler tour} if it contains all edges of $H$. We prove that the problem of deciding if a given $3$-uniform hypergraph has a tight Euler tour is NP-complete (even if the maximum codegree of a pair of vertices is bounded by $4$), and that it cannot be solved in time $2^{o(m)}$ (where $m$ is the number of edges in the input hypergraph), unless the ETH fails. We also present an exact exponential algorithm for the problem, whose time complexity matches this lower bound, and the space complexity is polynomial.
In fact, this algorithm solves a more general problem of computing the number of tight Euler tours in a given uniform hypergraph.
\end{abstract}

Problems of existence and construction of Euler tours and Hamilton cycles in graphs are among the most intensively studied problems in graph theory. Recently much attention has also been received by Hamiltonian problems for uniform hypergraphs. They have been studied from both theoretical (see e.g. \cite{RRS1,RRS2}, \cite{KO}, \cite{KK}, \cite{HS}, \cite{KMO}), and computational (see \cite{DHK}, \cite{S}, \cite{KRS}) perspectives and led to such spectacular results as a Dirac-type theorem on tight Hamilton cycles in $3$-uniform hypergraphs (see \cite{RRS2}). By contrast, not much research has been done on problems of existence and construction of Euler tours in uniform hypergraphs (see \cite{LN,LN1} and \cite{BS}).

There are many possible generalizations of the graph-theoretic concepts of a cycle and a tour to uniform hypergraphs. The results on Euler tours mentioned in the preceding paragraph concern so-called Berge cycles and tours. In this paper we deal with {\it tight tours}.

It is worth to mention that the problem of existence of a tight Euler tour is nontrivial even for complete $k$-uniform hypergraphs with $k\geq 3$. An anticipated solution to this problem is a content of the following conjecture originally formulated in slightly different terminology of so-called universal cycles.
\begin{conjecture}
(\cite{CDG}) For every $k\geqslant2$ there exists $n_0(k)$ such that for every $n>n_0(k)$ the complete $k$-uniform hypergraph on $n$ vertices has a tight Euler tour if and only if $\binom{n-1}{k-1}\equiv0\pmod k$.
\end{conjecture}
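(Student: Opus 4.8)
The plan is to treat the two directions of the stated equivalence separately, since the necessity of the congruence is elementary while the sufficiency is the genuinely hard (and, I expect, still open) part.

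\textbf{Necessity.} Suppose $(w_0,\ldots,w_{s-1})$ is a tight Euler tour of the complete $k$-uniform hypergraph $K_n^{(k)}$; then $s=\binom{n}{k}$, since the windows $e_i$ are pairwise distinct edges and every edge occurs. Fix a vertex $v$. Each window $e_i$ is a $k$-set and so contains $v$ at most once, while the windows containing $v$ are exactly the edges through $v$, of which there are $\binom{n-1}{k-1}$. I would then observe that any two occurrences $w_j=w_{j'}=v$ lie at cyclic distance at least $k$: if $0<j'-j\le k-1$ the window $e_j=\{w_j,\ldots,w_{j+k-1}\}$ would contain $v$ twice, contradicting $|e_j|=k$. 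Hence the $k$ windows $e_{j-k+1},\ldots,e_j$ owned by distinct occurrences of $v$ are pairwise disjoint and together exhaust all windows through $v$, so $k\cdot(\text{number of occurrences of }v)=\binom{n-1}{k-1}$. This forces $\binom{n-1}{k-1}\equiv 0\pmod k$ and, as a byproduct, shows that every vertex must occur exactly $\binom{n-1}{k-1}/k$ times.

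\textbf{Sufficiency — reformulation.} For the converse I would pass to a transition digraph $\mathcal{D}_{n,k}$ whose vertices are the ordered $(k-1)$-tuples of distinct elements of $[n]$, with an arc from $(x_1,\ldots,x_{k-1})$ to $(x_2,\ldots,x_{k-1},y)$ for each admissible $y$, labelled by the $k$-set $\{x_1,\ldots,x_{k-1},y\}$. A tight Euler tour then corresponds exactly to a selection of one arc per $k$-set whose chosen arcs form an in/out-balanced and weakly connected subdigraph: any Eulerian circuit of such a subdigraph, read off by recording the newly added vertex at each step, is a tight Euler tour. Selecting an arc for a $k$-set $S$ amounts to choosing a linear order $\pi_S$ of its elements, and the out-degree (resp.\ in-degree) of a tuple $t$ in the selected subdigraph counts the sets $S$ whose order $\pi_S$ begins (resp.\ ends) with $t$.

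\textbf{Sufficiency — construction and the main obstacle.} The task now splits into a \emph{local} balance problem and a \emph{global} connectivity problem. For balance I would try to define the orders $\pi_S$ by an explicit, highly symmetric rule, for instance a rotation scheme indexed by a residue such as $\sum_{a\in S}a \bmod k$, engineered so that the congruence $\binom{n-1}{k-1}\equiv 0\pmod k$ makes the ``begins-with-$t$'' and ``ends-with-$t$'' counts coincide at every tuple $t$; this should reduce balance to a single counting identity enforced by the divisibility hypothesis. Any balanced selection already admits an Eulerian circuit on each of its weakly connected components, so all that remains is to merge these components into one. Here I would exploit the hypothesis $n>n_0(k)$: given two components, one should be able to locate a bounded family of $k$-sets lying in different components whose chosen orders can be re-selected simultaneously along a short alternating cycle, splicing the components together while preserving balance, and then iterate until the subdigraph is connected. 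The main obstacle is precisely this merging step: controlling balance and connectivity at once, and doing so \emph{uniformly in $k$}, is delicate and is exactly what leaves the conjecture open in general. I would expect the argument above to go through cleanly only for small $k$ (it recovers the Eulerian-circuit characterization for $k=2$, where the condition reads ``$n$ odd''), and to require a genuinely new idea in the connectivity phase before it could settle the conjecture for all $k$.
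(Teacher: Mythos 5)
This statement is a conjecture (due to Chung, Diaconis and Graham) that the paper does not prove: it is quoted only as motivation, with the remark that it is known for $k=3,4,5$ (and partially for $k=6$) and open otherwise. So there is no proof in the paper to compare against. Your necessity argument is correct and complete: since each window $e_i$ is a $k$-element set, two occurrences of a fixed vertex $v$ in the tour must be at cyclic distance at least $k$; each occurrence therefore owns $k$ pairwise distinct windows through $v$, these blocks are disjoint and exhaust the $\binom{n-1}{k-1}$ windows through $v$, and the divisibility follows, together with the regularity consequence you note. This is the standard counting argument and it is sound.

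The genuine gap is, as you yourself acknowledge, the sufficiency direction, and your sketch does not close it. The reformulation via the transition digraph on ordered $(k-1)$-tuples, with a tight Euler tour corresponding to a balanced, weakly connected choice of one arc per $k$-set, is the right framework (it is essentially how Jackson and Hurlbert proceed for small $k$). But the two steps you leave open are exactly where the difficulty lives. First, no explicit rotation rule such as one indexed by $\sum_{a\in S}a \bmod k$ is known to produce a balanced selection for general $k$: the congruence $\binom{n-1}{k-1}\equiv 0\pmod k$ guarantees the correct global count at each tuple only on average, and does not by itself equidistribute the ``begins with $t$'' and ``ends with $t$'' statistics over all ordered $(k-1)$-tuples $t$. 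Second, the component-merging step has no known argument that is uniform in $k$; the proofs for $k\le 5$ rely on ad hoc constructions tailored to each value. Your proposal therefore establishes only the ``only if'' half of the equivalence; the ``if'' half remains open, which is consistent with the paper's own framing of the statement as a conjecture.
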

The conjecture obviously holds for $k=2$. \cite{J,J1} and \cite{H} proved it for $k = 3, 4, 5$ and for $k = 6$ when $n$ is relatively prime with $6$. In the remaining cases the conjecture is open.

Another well-known problem related to the subject of this paper is a conjecture by Baranyai and Katona (see \cite{K}, Conjecture 4.1). Let $\cal F$ be the $k$-uniform hypergraph on the vertex set $\{ 1,2,\ldots,n\}$ whose edges are  
\[
\{ 1,2,\ldots,k\}, \{ k+1,k+2,\ldots,2k\},\ldots,\{ (a-1)k+1,(a-1)k+2,\ldots,ak\},
 \]
where $a=\frac{m}{k}$ and $m$ is the least common multiple of $n$ and $k$ (the elements of the sets are computed modulo $n$). 
The hypergraphs obtained from $\cal F$ by permuting the elements of $\{ 1,2,\ldots,n\}$ are called {\it wreaths}. 
Observe that if $n$ and $k$ are coprime, then a wreath is exactly a tight Hamiltonian cycle (i.e. a tight tour which contains every vertex of a hypergraph exactly once). Baranyai and Katona conjectured that the set of edges of the complete $k$-uniform hypergraph on $n$ vertices can be partitioned into disjoint wreaths. It is worth mentioning that when $k$ divides $n$, then this conjecture reduces to the famous Baranyai theorem (see \cite{B}) on a factorization of the complete uniform hypergraph.

Here are some other problems that motivate studying computational aspects of Eulerian problems in hypergraphs.

In computer graphics, 3-dimensional objects are usually represented by 3-dimensional, usually triangular, meshes (see \cite{PKK}). Storing and processing such a representation as raw data might be quite costly.
Indeed, a trivial way of encoding is to store every triangle separately, i.e.  the coordinates of all three vertices are stored. It requires storing $3m$ vertices (where $m$ is the number of triangles in the considered triangulation).
One of typical methods to deal with this problem is to partition the mesh into long strips of triangles and then encode these strips. Observe that if we manage to arrange the vertices of the mesh in a (cyclic) sequence in such a way that the vertices of every triangle appear exactly once as the three consecutive terms of this sequence and every three consecutive terms of it constitute some triangle, then we will have to store the coordinates of $m$ vertices only. Obviously, it is not always possible (see \cite{AHMS}).

Another motivation originates from database theory. Let $V$ be a set of records of some database and let $E$ be a family of some $k$-element subsets of $V$, each of them is supposed to be an answer to some query (e.g. it is a set of $k$ records which satisfy some given conditions in ``the best'' way). We want to minimize the time needed to answer all possible queries. One way of achieving this goal is to organize the records of the data base as a cyclic list $C$ in such a way that every element of $E$ is represented in $C$ as $k$ consecutive elements. The list $C$ is optimal in this approach if every member of $E$ is represented in $C$ exactly once and every $k$ consecutive elements of $C$ constitute some member of $E$ (see \cite{LTT} for some more details).

In both problems described above we have some $k$-uniform hypergraph (in the former example $k=3$) and we want to find a tight Euler tour in this hypergraph.

In this paper we discuss computational complexity issues related to tight Euler tours. Our main result (Theorem \ref{wykreconetorusy}) says that the problem of deciding if a given $3$-uniform hypergraph has a tight Euler tour is NP-complete .

The construction in the proof of Theorem \ref{wykreconetorusy} gives also some insight into the complexity of algorithms solving the problem.
The complexity assumption that we use here is the Exponential Time Hypothesis (usually referred to as the ETH) (\cite{IP-ETH}). A consequence of this hypothesis, typically used to give conditional lower bounds for the running time of algorithms, is the fact that {\sc 3-Sat} with $t$ variables and $p$ clauses cannot be solved in time $2^{o(t+p)}$~(\cite{IPZ-Sparsification}). We show that the existence of a tight Euler tour in a 3-uniform hypergraph cannot be decided in time $2^{o(m)}$ (where $m$ denotes the number of edges of the input hypergraph), unless the ETH fails. This lower bound is matched with a simple exact algorithm with time complexity $2^m \cdot m^{O(1)}$ and polynomial space complexity (see Section \ref{algo}). In fact, the algorithm solves a more general problem of counting the number of tight Euler trails and tours in the input hypergraph.

\section{Preliminaries}

For a positive integer $n$, we define $[n] := \{1,2,\ldots,n\}$.
By a $k$-\emph{uniform hypergraph} we mean a pair $H=(V,E)$, where $V$ is a non-empty finite set and $E\subseteq\binom{V}{k}:=\{X\subseteq V:|X|=k\}$. For a hypergraph $H=(V,E)$, by $E(H)$ we denote the set $E$. The elements of $V$ are called \emph{vertices} of $H$ and the elements of $E$ its \emph{edges}. The codegree of a pair of different vertices $vu$ in a hypergraph is the number of edges in $E$ containing both $v$ and $u$.

A \emph{tight walk} in a $k$-uniform hypergraph $H$ is a sequence  $w_1w_2\ldots w_{s}$ of vertices of $H$ such that for all $i=1,\ldots,s-k+1$ the set $e_i := \{w_i,w_{i+1},\ldots,w_{i+k-1}\}$ is an edge of $H$.

For a tight walk $C=w_1w_2\ldots w_{s}$, we say that an edge $e$ is \emph{contained} in $C$ (or just that $e$ is an edge of $C$) if $e = e_i$ for some $i=1,\ldots,s-k+1$. Note that a single edge may appear in a tight walk more than once, i.e., it is possible that $e_i = e_j$ for some $i \neq j$. Then, we say that an edge is contained in $C$ several times. If every edge is contained in $C$ at most once, then we call $C$ a \emph{tight trail}.

By the \emph{length} of a tight walk $C=w_1w_2\ldots w_{s}$ we mean the number $s-k+1$.
Note that if $C$ is a tight trail, then the length of $C$ is equal to the number of edges contained in $C$.

A tight trail $C$ is a \emph{tight Euler trail} in $H$ if every edge of $H$ is contained in  $C$ exactly once. Note that the length of a tight Euler trail is equal to $|E(H)|$.

A \emph{tight tour} in $H$ is a closed tight trail, i.e., a tight trail $C=w_1w_2\ldots w_mw_1w_2\ldots w_{k-1}$. Such a tight tour will be denoted by $(w_1,w_2,\ldots,w_m)$. A {\em tight Euler tour} is a tight tour which is a tight Euler trail. Note that the length $m$ of a tight Euler tour in $H$ is equal to $|E(H)|$.

\begin{remark}
\label{symetrie}
Observe that if $(w_1,w_2,\ldots,w_m)$  is a tight tour (respectively, tight Euler tour) in $H$, then the reversed sequence $(w_{m},\ldots,w_2,w_1)$ and each  sequence $(w_i,\ldots,w_{m},w_1,\ldots,w_{i-1})$, for $i=1,\ldots,m$, are also tight tours (resp., tight Euler tours) in $H$. We shall identify all these tours with $C$.
Thus, a tight Euler tour corresponds to exactly $2|E(H)|$ closed tight Euler trails.
\end{remark}

Now we turn to $3$-uniform hypergraphs. Our aim is to prove that the problem of deciding if a given $3$-uniform hypergraph has a tight Euler tour is NP-complete. To simplify the notation we shall write  $xyz$ rather than $\{x,y,z\}$ if $\{x,y,z\}$ is an edge of a $3$-uniform hypergraph.
%\begin{remark}
\label{trzy}
If $e$ is an edge of some $3$-uniform hypergraph $H$ and $C$ is a tight tour in $H$ which contains $e$, then the three vertices of $e$ are some three consecutive terms of $C$.
%\end{remark}

We shall construct now a hypergraph $H_\ell$ (where $\ell>4$ is an integer) which will be used later to prove our main result. Informally speaking, $H_\ell$ is a $3$-uniform hypergraph whose edges are the triangles in a triangulation of the torus depicted in Figure \ref{najpiekniejszy}.
Here is a formal definition of $H_\ell$.

\begin{definition}
For each integer $\ell>4$ we define a hypergraph $H_\ell=(V,E)$ such that
\[V=\{0,\ldots,\ell+1\}\cup\bigcup_{i=1}^{\ell-1}\{v_i^a,v_i^b\}\]
\noindent and
\[E=\bigcup\limits_{i=1}^\ell E_i,\]
\noindent where
\[E_1=\{10\ell,1v_1^a\ell,v_1^a\ell(\ell+1),v_1^av_1^b(\ell+1),v_1^b(\ell+1)0,v_1^b10\},\]
\[E_i=\{i(i-1)v_{i-1}^a,iv_i^av_{i-1}^a,v_i^av_{i-1}^av_{i-1}^b,v_i^av_i^bv_{i-1}^b,v_i^bv_{i-1}^b(i-1),v_i^bi(i-1)\}\]
\noindent for $i=2,\ldots,\ell-1$, and
\[E_\ell=\{\ell(\ell-1)v_{\ell-1}^a,\ell(\ell+1)v_{\ell-1}^a,(\ell+1)v_{\ell-1}^av_{\ell-1}^b,(\ell+1)0v_{\ell-1}^b,0v_{\ell-1}^b(\ell-1),0\ell(\ell-1)\}.\]
\end{definition}

\vspace{5mm}

\begin{figure}[bht!]
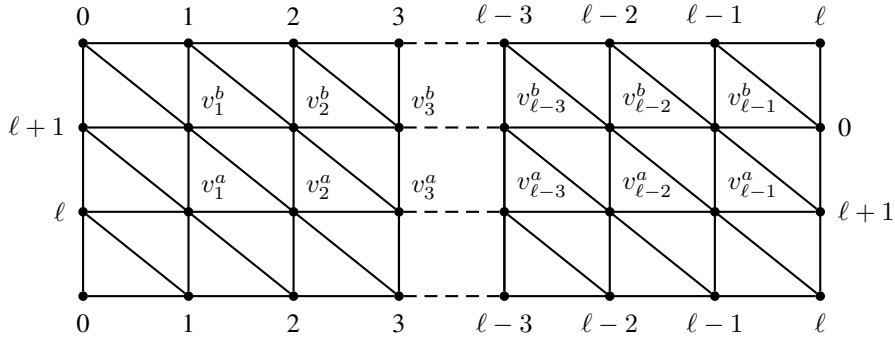

\label{najpiekniejszy}
\begin{center}

\vspace{2mm}

\input najpiekniejszy_ell.tex
\caption{The hypergraph $H_\ell$. In this picture we identify vertices labelled with the same label.}
\end{center}
\end{figure}

Observe that the maximum codegree of a pair of vertices in $H$ is $2$.
We shall study some properties of tight tours in the hypergraph $H_\ell$ now.

Let $e=xyz$ be an arbitrary edge of a tight tour $C=(w_0,\ldots,w_{m-1})$ in $H_\ell$. Then, the vertices of $e$ are some three consecutive terms $w_i,w_{i+1},w_{i+2}$ of $C$. Let us assume first that they appear in the ordering $xyz$, i.e. $w_i=x$, $w_{i+1}=y$, and $w_{i+2}=z$, where the operations on indices are computed modulo $m$. By the definition of a tight tour, $w_{i+1}w_{i+2}w_{i+3}=yzw_{i+3}$ is an edge of $H_\ell$ different from $e$. However, as a pair of distinct vertices of $H_\ell$ is contained either in exactly two edges or in none of the edges, we conclude that the vertex $w_{i+3}$ is uniquely determined by the ordering $xyz$ of the vertices of the edge $e$. The same applies to all other edges contained in $C$, so the ordering of the vertices of some edge contained in $C$ determines the whole tour. Observe that the reversed ordering $zyx$ of the vertices of $e$ defines the same tour $C$. Hence, each edge in $H_\ell$ is contained in at most $3!/2 =3$ tours. From now on we shall identify the mutually reversed orderings of vertices of the edges of $H_\ell$.

Let us describe the tight tours in $H_\ell$ determined by the 3 orderings of the edge $01\ell$ (it does not really matter which edge we choose).

At the beginning let us consider the ordering $01\ell$. It determines the tour $(0,1,\ell,v_1^a,\ell+1,v_1^b)$. This tour contains only $6$ of the $6\ell$ edges of the hypergraph $H_\ell$, so it is not a tight Euler tour. We call such tours in $H_\ell$ the \emph{tours of type $N$}.

Now let us consider the ordering $0\ell1$ of our edge. It determines the tour

\[
(0,\ell,1,v_1^a,2,v_2^a,\ldots,\ell-1,v_{\ell-1}^a,\ell,\ell+1,v_1^a,v_1^b,\ldots,
\]
\[
v_{\ell-1}^a,v_{\ell-1}^b,\ell+1,0,v_1^b,1,\ldots,v_{\ell-1}^b,\ell-1),
\]

\noindent which is a tight Euler tour in $H_\ell$. We call this tight Euler tour in $H_\ell$ the \emph{tour of type} $T$.

Finally, the ordering $\ell01$ determines the tight  tour beginning with
\[(\ell,0,1,v_1^b,v_2^b,v_2^a,v_3^a,3,4,\ldots).\]
It is routine to verify that this tour is an Euler tour in $H_\ell$ if and only if $\ell\not\equiv1\text{ }(\text{mod } 3)$. We call this tight Euler tour in $H_\ell$ the \emph{tour of type} $\mathit{F}$.

Figure \ref{przykladfalszu1} depicts the tight Euler tour of type $F$ in the hypergraph $H_5$.

\vspace{5mm}

\begin{figure}[bht]
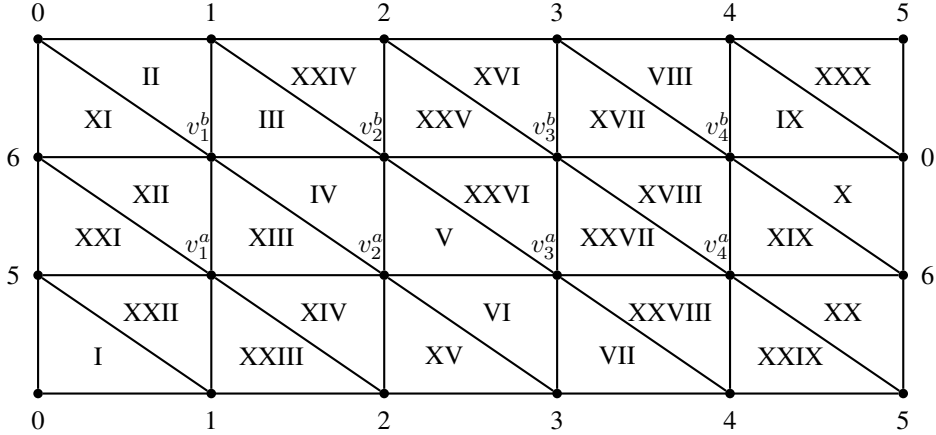

\begin{center}
\input przykladfalszu2.tex
\caption{The tight tour $(5,0,1,v_1^b,v_2^b,v_2^a,v_3^a,3,4,v_4^b,0,6,v_1^b,v_1^a,v_2^a,2,3,v_3^b,$ $v_4^b,v_4^a,6,5,v_1^a,1,2,v_2^b,v_3^b,v_3^a,v_4^a,4)$ of type $F$ in $H_5$. The consecutive edges of this tour are enumerated with Roman numerals.}\label{przykladfalszu1}
\end{center}
\end{figure}

It is clear that there are no more tight Euler tours in $H_\ell$.
From now on we consider only those values of $\ell$, for which the hypergraph $H_\ell$ has exactly two tight Euler tours.

For any edge $e\in E$, the vertices of $e$ are ordered differently in each of the three types of tight tours in $H_\ell$. Each of the two mutually reversed orderings of vertices of $e$ in the tour of type $N$ (resp. of type $T$, of type $F$) will be called an \emph{ordering of $e$ of type} $N$ (resp.  \emph{of type} $T$, \emph{of type} $F$).

Let us observe that the type of an ordering of an edge $e$ is determined by the pair of its vertices which are not consecutive in the ordering. For example, for the edge $01\ell$, the pair $0\ell$ determines an ordering of type $N$, the pair $01$ -- an ordering of type $T$ and the pair $1\ell$ -- an ordering of type $F$. In general, each  pair of vertices joined with a vertical (resp. horizontal, skew) line in Figure \ref{najpiekniejszy} determines orderings of type $N$ (resp. $T$, $F$) of the edges containing the pair.

Let $n>2$. By a \emph{cycle} $C_n$ we mean a $3$-uniform hypergraph

\[(\{0,\ldots,n-1\},\{i(i+1)(i+2):i=0,\ldots,n-1\}),\]

where additions are computed modulo $n$. Note that the maximum codegree of a pair of vertices in $C_n$ is 2.

Clearly, the hypergraph $C_n$ has exactly one tight Euler tour. Abusing the definitions a little bit, we shall  identify sometimes cycles with their tight Euler tours.

%-------------------------------------------------------

\section{Complexity results}

In this section we prove NP-completeness of the problem of determining the existence of a tight Euler tour in a $3$-uniform hypergraph. To make the proof of this statement easier to follow we give a short outline of the main steps of the reasoning.

In the proof we apply a reduction from \textsc{3-Sat}. To this end, for any \textsc{3-Sat} formula, we build a $3$-uniform hypergraph $H$. For every variable $x_i$ appearing in this formula we define a copy of the hypergraph $H_{\ell_i}$ (for some $\ell_i$). As we have seen, the hypergraph $H_{\ell_i}$ (for suitable values of $\ell_i$) has exactly two tight Euler tours. One of them (of type $T$) corresponds to setting $x_i$ as true. The other one (of type $F$) corresponds to setting $x_i$ as false. For every clause $d_j$ of the formula we define a copy of the cycle $C_6$. Some vertices of it are identified with some vertices of the copies of $H_{\ell_i}$ corresponding to the variables appearing in $d_j$. These identifications are designed in such a way that a tour in $H$ can get from the copy of $C_6$ to $H_{\ell_i}$ forcing the tour of type $T$ in $H_{\ell_i}$ if and only if $x_i$ appears positively in $d_j$ and, analogously, such a tour can get from $C_6$ to $H_{\ell_i}$ forcing the tour of type $F$ in $H_{\ell_i}$ if and only if $x_i$ appears negated in $d_j$. The most important property of the hypergraph $H$ is that in every tight Euler tour in $H$ the orderings of all edges of each subhypergraph $H_{\ell_i}$ are of the same type -- either $T$ or $F$. Therefore such an Euler tour defines a truth assignment of the variables of the \textsc{3-Sat} formula (and vice versa).

\begin{theorem}
\label{wykreconetorusy}

The problem of deciding if a given $3$-uniform hypergraph $H$ has a tight Euler tour is NP-complete, even if the maximum codegree of a pair of vertices in $H$ is bounded by $4$.
Moreover, it cannot be solved in time $2^{o(|E(H)|)}$, unless the ETH fails.
\end{theorem}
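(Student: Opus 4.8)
The plan is to establish NP-completeness by a polynomial-time many-one reduction from \textsc{3-Sat}, building on the properties of the gadget hypergraph $H_\ell$ already developed in the excerpt. Membership in NP is immediate: a tight Euler tour is a sequence of $|E(H)|$ vertices, which can be verified in polynomial time by checking that each window of $k=3$ consecutive terms forms an edge and that all $m$ edges appear exactly once. The substance of the proof lies in the hardness reduction. Given a \textsc{3-Sat} instance with variables $x_1,\ldots,x_t$ and clauses $d_1,\ldots,d_p$, I would first fix, for each variable $x_i$, an integer $\ell_i>4$ with $\ell_i\not\equiv1\pmod 3$ so that the corresponding copy of $H_{\ell_i}$ has exactly two tight Euler tours, one of type $T$ and one of type $F$. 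The intended correspondence is: type $T$ encodes $x_i=\mathtt{true}$, type $F$ encodes $x_i=\mathtt{false}$. I would then introduce, for each clause $d_j$, a copy of the cycle $C_6$, and glue these clause gadgets to the variable gadgets by identifying selected vertices.

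The heart of the construction is the gluing rule, and I expect this to be the main obstacle. The identifications must be arranged so that a tight Euler tour, upon entering a copy of $H_{\ell_i}$ from the clause gadget $C_6$ for $d_j$, is \emph{forced} into the type-$T$ ordering of the local edges precisely when $x_i$ occurs positively in $d_j$, and into the type-$F$ ordering precisely when $x_i$ occurs negated in $d_j$. The key fact I would lean on, already recorded in the excerpt, is that the type of an ordering of a given edge is determined by which pair of its vertices is non-consecutive (vertical/horizontal/skew lines in Figure~\ref{najpiekniejszy} correspond to types $N$/$T$/$F$). By choosing which vertices of $C_6$ to identify with which vertices of $H_{\ell_i}$, I can force the shared edge at the junction to be traversed in a prescribed ordering type, and then the propagation property of $H_\ell$ (that fixing the ordering of one edge determines the whole tour, as shown earlier) guarantees that \emph{all} edges of $H_{\ell_i}$ inherit the same type. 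This is exactly the ``most important property'' flagged in the outline: in any tight Euler tour of $H$, all edges of each $H_{\ell_i}$ carry a single consistent type, either $T$ or $F$, so that a tour induces a well-defined truth assignment and conversely.

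With the gadgets fixed, I would prove the two directions of correctness. For the forward direction, given a satisfying assignment, I would stitch together the type-$T$ or type-$F$ Euler sub-tours of the variable gadgets with the $C_6$ clause gadgets---using that each satisfied clause has at least one literal whose gadget is traversed in the matching type---into a single closed tight trail covering every edge exactly once. For the reverse direction, I would argue that any tight Euler tour of $H$ forces each variable gadget into a uniform type (by the propagation argument above), reads off the induced assignment, and show that the way the tour must pass through each $C_6$ certifies that every clause is satisfied by at least one literal. Throughout I would track the maximum codegree: since $H_\ell$ and $C_6$ each have codegree $2$, and vertices are shared only at the glue points where a bounded number of gadgets meet, a careful accounting of how many copies touch any identified vertex keeps the codegree of $H$ bounded by $4$. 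Finally, for the ETH lower bound, I would verify that the reduction is linear in size: $|E(H)| = O(t+p)$, since each variable gadget $H_{\ell_i}$ contributes $6\ell_i$ edges with $\ell_i$ bounded by a constant (three candidate residues suffice), and each clause contributes a constant-size $C_6$. A subquadratic $2^{o(|E(H)|)}$ algorithm would then yield a $2^{o(t+p)}$ algorithm for \textsc{3-Sat}, contradicting the ETH via the Sparsification Lemma.
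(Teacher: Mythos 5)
Your proposal follows the paper's overall strategy (reduction from \textsc{3-Sat}, variable gadgets $H_{\ell_i}$ with exactly two tight Euler tours encoding truth values, clause gadgets $C_6$ glued by vertex identification, linear size for the ETH bound), but it contains two concrete gaps. First, your size analysis asserts that $\ell_i$ is ``bounded by a constant (three candidate residues suffice).'' This cannot work: each occurrence of $x_i$ needs its \emph{own} attachment site in $H^i$ (the paper identifies the relevant vertices of $C_6^j$ with $v_{2o_i}^a v_{2o_i+1}^a$ or $v_{2o_i}^b v_{2o_i+1}^a$, where $o_i$ indexes the occurrence), so one must take $\ell_i \geqslant 2\occ(x_i)+3$. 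With a constant-size gadget, a variable occurring many times would force several clause gadgets to share connectors, destroying both the codegree bound of $4$ and the forcing/gluing argument. The total size is still linear, but only because $\sum_i \occ(x_i) = 3p$, not because each $\ell_i$ is constant. Second, you omit any structure connecting the variable gadgets to one another. In the forward direction you glue each $C_6^j$ to \emph{one} witness literal's gadget; a variable gadget that is never chosen as a witness (or a variable occurring only in clauses satisfied by other literals) is then left as a separate closed tour, and the union is not a single tight Euler tour. The paper adds a cycle $C_t$ whose vertices are identified with $v_{\ell_i-1}^a, v_{\ell_i-1}^b$ in every $H^i$ precisely to guarantee that all the partial tours can be merged; some such global connector is indispensable.

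Two further points need more care than your sketch gives them. The ``propagation property'' you invoke (one edge's ordering determines the whole tour in $H_\ell$) was proved for $H_\ell$ in isolation, where every pair of vertices has codegree $0$ or $2$; after the identifications, connector pairs have codegree $4$ in $H$, so a tour can leave $H^i$ there and the propagation can break. The paper repairs this by showing propagation across every non-connector pair (Claim 1) and observing that the connector pairs form an acyclic graph inside $H^i$, so the non-connector adjacencies still tie all edges of $H^i$ to a single type (Claim 2). Finally, uniformity of type alone is not enough: you must also exclude the possibility that all edges of some $H^i$ carry type $N$, which the paper does by noting that the six edges of $E_1$ would then form a short closed tour containing no connector, which cannot be part of a tight Euler tour of $H$. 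Without these two arguments the reverse direction of the reduction is incomplete.
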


\begin{proof}
Verifying if a sequence of vertices is a tight Euler tour in a hypergraph can obviously be done in polynomial time, so our problem is in the class NP.

To prove it is complete in this class, we show a reduction from \textsc{3-Sat}.
Let $c=d_1\land\ldots\land d_p$ be a \textsc{3-Sat} formula, where for all $j=1,\ldots,p$, each  $d_j$ is a clause with exactly three literals. We can assume that no clause contains both a variable and its negation because such a clause can be deleted from the formula $c$ without changing its satisfiability. Let $X=\{x_1,\ldots,x_t\}$ be the set of variables appearing in the formula $c$. For a variable $x_i$, let $\occ(x_i)$ denote the number of occurrences of $x_i$ in $c$ (either positive of negative). Moreover, we keep these occurrences ordered. The ordering is implied by the ordering $d_1,d_2,\ldots,d_p$ of the clauses.

For the formula $c$ we construct a $3$-uniform hypergraph $H$.
For every $i = 1,\ldots,t$, let $\ell_i$ be the smallest integer such that $\ell_i \geqslant 2 \occ(x_i) +3$ and $\ell_i \not\equiv1 \pmod{3}$. Obviously, $\ell_i\leqslant2\occ(x_i)+4$.

For each variable $x_i$ we define a copy of the hypergraph $H_{\ell_i}$,  which we denote by $H^i$.
Next, for each clause $d_j$ we define a copy of a cycle $C_6$, which we denote by $C_6^j$.
Finally, we define one copy of the cycle $C_t$.

At the beginning we assume that the vertex sets of all hypergraphs $H^1,\ldots,H^t$, $C_6^1,\ldots,C_6^p$ and $C_t$ are pairwise disjoint. However, we shall identify some vertices in the next steps of the construction of the hypergraph $H$.
In each identification step we will choose two vertices $x,y$, belonging to a common edge, and two different vertices $x',y'$, also belonging to some edge, and will identify $x$ with $x'$ and $y$ with $y'$.
All such pairs $xy$ of identified vertices will be called the \emph{connectors}.

Firstly, for every $i=1,\ldots,t$, we identify the vertices $2i-2$ and $2i-1$ of the hypergraph $C_t$ with the vertices $v_{\ell_i-1}^a$ and $v_{\ell_i-1}^b$, respectively, of the hypergraph $H^i$.

Let $d_j$ be a clause in the formula $c$ and let $x_i$, $x_k$, and $x_s$ be the variables appearing in $d_j$.
Moreover, for $q \in \{i,k,s\}$, let $o_q$ be the number, such that the occurrence of $x_q$ in $d_j$ is the $o_q$-th occurrence of this variable.  If the variable $x_i$ is negated in $d_j$, then we identify the vertices $0$ and $1$ of the hypergraph $C_6^j$ with the vertices $v_{2o_i}^a$ and $v_{2o_i+1}^a$, respectively, of the hypergraph $H^i$. And if a variable $x_i$ is not negated in $d_j$, then we identify the vertices $0$ and $1$ of the hypergraph $C_6^j$ with the vertices $v_{2o_i}^b$ and $v_{2o_i+1}^a$, respectively, of the hypergraph $H^i$. Analogously we identify the vertices $2$ and $3$ of the hypergraph $C_6^j$ with the vertices of $H^k$ and the vertices $4$ and $5$ of $C_6^j$ with the vertices of $H^s$.

These identifications complete the definition of the hypergraph $H$. Note that the hypergraph $H$ can be constructed in a polynomial time for any \textsc{3-Sat} formula $c$.

Observe that in each identification step we identified a pair of vertices from some hypergraph $H^i$ with a pair of vertices from some cycle (either $C_t$ or $C_6^j$). Thus the subhypergraphs of $H$ induced by the edges of the hypergraphs $C_6^1,\ldots,C_6^p$, and $C_t$ are still vertex disjoint. The same is true for the subhypergraphs induced by the edges of $H^1,\ldots,H^t$. Moreover, the maximum codegree of a pair of vertices in $H$ is 4.

%------------------------------

\medskip
We shall prove now several claims concerning properties of the hypergraph $H$.

\begin{claim}
Let $C$ be a tight Euler tour in $H$ and let $e$ and $f$ be two edges of some subhypergraph $H^i$ such that they intersect on a $2$-element set, which is not a connector. Then the orderings of $e$ and $f$ defined by the tour $C$ are of the same type in $H^i$.
\end{claim}

\begin{inproof}{1}
Let $e=xyz$ and $f=xyu$. If the ordering of the edge $e$ in $C$ is $zxy$ or $zyx$ (or the reverse of one of them), then, as $xy$ is not a connector, the edge $f$ precedes or succeeds $e$ in the tour $C$. On the other hand, obviously, $e$ and $f$ are consecutive edges of some tour in $H^i$. Therefore, by the definition of the type of an ordering of an edge, the orderings of $e$ and $f$ in $C$ are of the same type.

Suppose now that the ordering of $e$ in $C$ is $xzy$ (or its reverse). Then, the ordering of $e$ is determined by the pair $xy$ and, obviously, $f$ is neither the successor nor the predecessor of $e$ in $C$. If the ordering of $f$ in $C$ is not determined by $xy$, then $x$ and $y$ appear consecutively in the ordering of $f$ in $C$. Moreover, since $xy$ is not a connector, similarly as in the previous case, we obtain that the edge $e$ is the successor or the  predecessor of $f$ in $C$, a contradiction. Thus, the ordering of $f$ in $C$ is determined by the pair $xy$ and, consequently, the orderings of $e$ and $f$ defined by $C$ are of the same type in $H^i$.
\end{inproof}

\begin{claim}
Let $C$ be a tight Euler tour in $H$.  For every $i=1,\ldots,t$, the orderings of all the edges of the subhypergraph $H^i$ defined by $C$ are of the same type, either $T$ or $F$.
\end{claim}

\begin{inproof}{2}
By Claim 1, the orderings defined by $C$ of every two edges in $H^i$ that have two common vertices which do not form a connector are of the same type in $H^i$. It is easy to observe that the pairs of vertices which are connectors in $H^i$ form the set of edges of an acyclic graph. Thus, the orderings defined by the tour $C$ of all edges in $H^i$ are of the same type.

Suppose now the orderings of all edges of $H^i$ are of type $N$. Then, however, the edges of the set $E_1$ in $H^i$ form a tight tour of length $6$. This tour cannot be a part of $C$ because the edges in $E_1$ do not contain any connector. Hence, the orderings of all edges of $H^i$ are of type $T$ or all are of type $F$.
\end{inproof}

\begin{claim}
Let $C$ be a tight Euler tour in $H$ and let $w_{k-1}w_kw_{k+1}w_{k+2}$ be some consecutive terms of $C$ such that $f=w_kw_{k+1}w_{k+2}\in E(H^i)$, for some $i=1,\ldots,t$.

\begin{enumerate}[(i)]
\item

If $e=w_{k-1}w_kw_{k+1}\in E(C_t)$, then the ordering $w_kw_{k+1}w_{k+2}\in E(H^i)$ of $f$ defined by $C$ is of type $T$ or $F$.

\item

If $e=w_{k-1}w_kw_{k+1}\in E(C_6^j)$, for some $j=1,\ldots,p$ and the ordering $w_kw_{k+1}w_{k+2}\in E(H^i)$ of $f$ defined by $C$ is of type $T$ (resp. of type $F$), then the variable $x_i$ appears positively (resp. negatively) in the clause $d_j$.
\end{enumerate}
\end{claim}

\begin{inproof}{3}
\begin{enumerate}[(i)]
\item

Clearly, in this case the pair of vertices $w_kw_{k+1}$ is the connector $v_{\ell_i-1}^av_{\ell_i-1}^b$. Thus, the vertices $v_{\ell_i-1}^a,v_{\ell_i-1}^b$ appear consecutively in the ordering of $f$ in $C$. On the other hand in the ordering of type $N$ of $f$ they do not appear consecutively. Hence, the ordering of $f$ defined by $C$ is of type $T$ or $F$.

\item

In this case the pair of vertices $w_kw_{k+1}$ is a connector equal to $v_{2o_i}^av_{2o_i+1}^a$ or $v_{2o_i}^bv_{2o_i+1}^a$. However, if the ordering of $f$ defined by $C$ is of type $T$, then it is determined by a pair of vertices joined by a horizontal line in Figure \ref{najpiekniejszy}. Thus, the pair $v_{2o_i}^av_{2o_i+1}^a$ (which is joined by a horizontal line in Figure \ref{najpiekniejszy}) cannot be this connector because the vertices of this pair appear consecutively in the ordering of $f$ defined by $C$. Consequently, the connector $w_kw_{k+1}$ is equal to $v_{2o_i}^bv_{2o_i+1}^a$. By the definition of the hypergraph $H$, the variable $x_i$ appears in the clause $d_j$ positively.

The version of Claim 3(ii) concerning the ordering of $f$ of type $F$ can be shown analogously.
\end{enumerate}
\end{inproof}

%------------------------------------------------------------------------------------------

\medskip
We are ready to prove that the hypergraph $H$ has a tight Euler tour if and only if the formula $c$ is satisfiable.

Let us assume first that $H$ has a tight Euler tour $C$. By Claim 2, for every $i=1,\ldots,t$, the orderings defined by $C$ of all edges of $H^i$ are of the same type -- either $T$ or $F$.

We define a truth assignment $w:X\rightarrow\{{\bf true},{\bf false} \}$ by setting $w(x_i)={\bf true}$ if the orderings of the edges of $H^i$ in $C$ are of type $T$ and $w(x_i)={\bf false}$ otherwise. We shall prove that each clause $d_j$, for $j=1,\ldots,p$, is satisfied under $w$. As $C$ is a tight Euler tour in $H$, there is a subsequence $w_{k-1}w_kw_{k+1}w_{k+2}$ of $C$ such that $e=w_{k-1}w_kw_{k+1}\in E(C_6^j) $ and $f=w_kw_{k+1}w_{k+2}\in E(H^i)$ for some $i$. By Claim 3(ii), if the ordering of $f$ defined by $C$ is of type $T$ (resp. of type $F$), then the variable $x_i$ appears in the clause $d_j$ positively (resp. negatively). In  both cases the truth assignment $w$ evaluates the clause $d_j$ to ${\bf true}$.

%---------------------------------------------------------------------------------

Let us assume now that the formula $c$ is satisfiable and let $w$ be a truth assignment for which $c$ evaluates to ${\bf true}$. We shall construct a tight Euler tour in $H$.

Let $(w_0,w_1,w_2,\ldots,w_q)$ and $(u_0,u_1,u_2,\ldots,u_r)$ be two edge-disjoint tight tours, such that they have two common vertices, say $x$ and $y$, that appear consecutively in both tours, say $x=w_0=u_0$ and $y=w_1=u_1$. Then the tight tours can be {\it glued together} to form a tight tour $(x,y,w_2\ldots,w_q,x,y,u_2\ldots,u_r)$ that contains all the edges of the two original tours.

We denote by $C^i$ the tight Euler tour in $H^i$ which is of type $T$ if $w(x_i)={\bf true}$ or of type $F$ if $w(x_i)={\bf false}$. Clearly, the tours $C^1,\ldots,C^t,C^1_6,\ldots,C^p_6,C_t$ are edge-disjoint and contain all edges of the hypergraph $H$. We shall construct a tight Euler tour in $H$ by gluing together these tours.

For every clause $d_j$ there is a literal in this clause that evaluates to ${\bf true}$ in the truth assignment $w$. Let $x_i$ be the variable in this literal. By the definition of the hypergraph $H$, if $x_i$ appears in $d_j$ positively (resp. negatively), then some two consecutive vertices of $C_6^j$ are identified with the vertices $v_{2o_i}^b$ and $v_{2o_i+1}^a$ (resp. $v_{2o_i}^a$ and $v_{2o_i+1}^a$) of $H^i$. Observe that these two vertices appear consecutively in the tour $C^i$ as well because the tour is of type $T$ (resp. of type $F$). Thus, in both cases we can glue the tours $C^i$ and $C^j_6$ together.

We repeat this gluing operation for every tour $C^j_6$ and create $t$ edge-disjoint tight tours, say $T_1,\ldots,T_t$, such that each tour $T_i$ contains all the edges of the hypergraph $H^i$ and the edges of every tour $C^j_6$ are contained in some of the tours $T_i$. To complete the construction of a tight Euler tour in $H$ we observe that each tour $T_i$ can be glued together with the tour $C_t$. Indeed, by the construction of the hypergraph $H$, some two consecutive vertices of $C_t$ are identified with the vertices $v_{\ell_i-1}^a$ and $v_{\ell_i-1}^b$ of $H^i$. Now, these two vertices appear consecutively in the Euler tours in $H^i$ of both types $T$ and $F$. Thus, we can glue the tours $C_t$ and $T_i$ together. We can repeat this gluing operation for every tour  $T_i$ which completes the construction of a tight Euler tour in $H$.

Note that the size of the hypergraph $H$ is linear in $p+t$, as the following calculations show

\begin{align*}
|E(H)| = & \sum_{i=1}^t |E(H^i)|+p|E(C_6)|+|E(C_t)| \\
= & 6\sum_{i=1}^t \ell_i +6p+2t\leqslant 6\sum_{i=1}^t (2\occ(x_i)+4)+6p+2t\\
= & 42p+26t.
\end{align*}

Thus, any algorithm deciding the existence of a tight Euler tour in a 3-uniform hypergraph in time $2^{o(|E(H)|)}$ could be easily transformed (in polynomial time) to an algorithm deciding the satisfiability of an arbitrary \textsc{3-Sat} instance in time $2^{o(p+t)}$, which would in turn contradict the ETH.
\end{proof}

The construction presented above can be easily adapted to show the hardness of determining the existence of a tight Euler trail in a given 3-uniform hypergraph.

\begin{theorem}
\label{ciasnasciezkaEulera}

The problem of determining if a given $3$-uniform hypergraph $H$ has a tight Euler trail is {\em NP}-complete. Moreover, it cannot be solved in time $2^{o(|E(H)|)}$, unless the ETH fails.
\end{theorem}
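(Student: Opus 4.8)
The plan is to reduce from the tight Euler tour problem of Theorem~\ref{wykreconetorusy} by a simple gadget that converts the ``closed'' requirement into an ``open'' one. Since we have just established that deciding the existence of a tight Euler \emph{tour} is NP-complete (with the ETH-based lower bound), the natural strategy is to take the hypergraph $H$ produced by the reduction in Theorem~\ref{wykreconetorusy} and attach a small fixed-size gadget so that $H$ has a tight Euler tour if and only if the modified hypergraph $H'$ has a tight Euler \emph{trail}. First I would fix a specific edge $e=xyz$ of $H$ lying on some cycle $C^j_6$ (so that its local structure is fully controlled), and append to $H$ a short tight path of constant length whose two endpoints force the trail to begin and end precisely at the two ``loose'' ends created by cutting the tour open at $e$.

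The key steps, in order, are as follows. First I would argue membership in NP, which is immediate and identical to the tour case: checking whether a given vertex sequence is a tight Euler trail is polynomial. Second, I would describe the gadget explicitly: the idea is that a tight Euler trail, being an open walk $w_1\ldots w_s$, has two distinguished ends, whereas a tour closes up; so the gadget must ``pin down'' where the trail opens. The cleanest way is to choose two pendant triangles attached along an existing edge so that the only vertices of degree forcing a trail endpoint are the two ends of the gadget, making every tight Euler trail in $H'$ traverse $H$ exactly as a tight Euler tour would, plus the fixed gadget edges at the two ends. Third, I would prove the forward direction: if $c$ is satisfiable, then $H$ has a tight Euler tour $(w_1,\ldots,w_m)$ by Theorem~\ref{wykreconetorusy}, and cutting this tour at the chosen edge and splicing in the gadget path yields a tight Euler trail in $H'$. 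Fourth, the reverse direction: any tight Euler trail in $H'$ must use the gadget edges at its two ends (since the gadget vertices have degrees that admit no other routing), and deleting them leaves a closed tight trail covering exactly $E(H)$, i.e.\ a tight Euler tour in $H$, whence $c$ is satisfiable. Finally, since the gadget has constant size, $|E(H')| = |E(H)| + O(1) = O(p+t)$, so the same ETH argument as in Theorem~\ref{wykreconetorusy} transfers verbatim, ruling out a $2^{o(|E(H)|)}$ algorithm.

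The main obstacle I anticipate is designing the gadget so that it genuinely \emph{forces} the trail endpoints while not introducing spurious tight trails, and in particular while preserving $3$-uniformity and not violating the structural arguments (Claims 1--3) that control the behaviour of tours inside each $H^i$. In a $3$-uniform hypergraph, ``degree'' arguments of the classical Eulerian type are subtler than in graphs, because the relevant local object is the pair-codegree structure and the way consecutive triples overlap, not merely vertex degrees. Concretely, I would need to verify that the two gadget endpoints cannot be bypassed --- that no tight Euler trail can ``float'' its endpoints elsewhere in $H$ --- which amounts to checking that every vertex of the original $H$ already sits in a locally tour-like configuration forbidding an open end there. This is exactly the kind of case analysis carried out in Claims~1--3, so I expect the verification to reuse those claims rather than require new machinery; the delicate point is simply confirming that the attachment of the gadget does not create a new edge whose orderings escape the type-$N/T/F$ trichotomy, which is why I would attach the gadget along an edge of a $C^j_6$ cycle rather than inside any $H^i$.
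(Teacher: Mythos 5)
Your strategy is genuinely different from the paper's: the paper does not reduce from the tour problem at all, but simply re-runs the entire 3-SAT reduction of Theorem~\ref{wykreconetorusy} with the cycle $C_t$ replaced by a tight path $P_t$ on $\{0,\ldots,2t-1\}$, so that the two terminal edges of $P_t$ supply the two ends of the trail and everything else (Claims 1--3, the gluing) goes through verbatim. That route is shorter and avoids designing any new gadget. A tour-to-trail reduction of the kind you propose is in principle viable and would even be more modular, but as written it has a real gap, described below.

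The gap is the \emph{closure} of the middle portion of the trail, not the forcing of its endpoints. First, a literal ``pendant triangle'' meeting $H$ in a single vertex cannot lie on any tight trail of length at least $2$, since all three of its vertex pairs would have codegree $1$; the gadget must be a pendant tight path of length at least $2$ whose last pair is a pair of an edge of $H$ (e.g.\ new vertices $a,b$ and edges $abw_1$, $bw_1w_2$ for a pair $w_1w_2$ traversed consecutively by the tour built in Theorem~\ref{wykreconetorusy}). Second, and more importantly: forcing the two trail endpoints into the gadget is easy (a trail has exactly two terminal edges, and each gadget end-edge has only one vertex pair of codegree at least $2$, so it is forced to be terminal --- no analysis of the local structure of $H$ is needed, contrary to your stated worry). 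But if the two pendant paths enter $H$ through \emph{different} pairs, say $xy$ and $yz$ of your chosen edge $e$, then deleting the gadget leaves only an \emph{open} tight Euler trail of $H$ from the pair $xy$ to the pair $yz$, and the existence of such a trail is not equivalent to the existence of a tight Euler tour of $H$; the reverse direction of your reduction collapses. To make the argument work you must hang both pendant paths off the \emph{same} ordered pair $w_1w_2$ (e.g.\ adding $abw_1$, $bw_1w_2$, $w_1w_2b'$, $w_2b'a'$), so that the trail is forced to have the form $a,b,w_1,w_2,\ldots,w_1,w_2,b',a'$ and its middle part is a closed trail, i.e.\ a tour of $H$; this choice, and the codegree case analysis showing the trail cannot shortcut through the gadget, is exactly the content that is missing from your proposal. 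The size bound $|E(H')|=|E(H)|+O(1)$ and the ETH transfer are fine once the gadget is fixed.
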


The proof of Theorem \ref{ciasnasciezkaEulera} is analogous to the proof of Theorem \ref{wykreconetorusy}. The only difference is in connecting the hypergraphs $H^i$, $i=1,\ldots,t$. In the proof of Theorem \ref{wykreconetorusy} we used the cycle $C_t$. Here we ought to use rather a path hypergraph $P_t$ which is defined by $V(P_t)=\{0,\ldots,2t-1\}$, $E(P_t)=\{i(i+1)(i+2):i=0,\ldots,2t-3\}$.

\section{Exact algorithm}
\label{algo}

In this section we describe an asymptotically tight exact algorithm for determining if a given $k$-uniform hypergraph has a tight Euler trail or a tight Euler tour. In fact, we compute the number of all tight Euler trails or tours in the input hypergraph. Our approach is based on a classical algorithm for counting Hamiltonian cycles in graphs (\cite{Karp}).

Let us start with introducing some notation. Let $H=(V,E)$ be a $k$-uniform hypergraph.
For a subset $E' \subseteq E$, by $E'_{k-1}$ we denote the set of all $(k-1)$-element sequences $\bx = x_1x_2\ldots x_{k-1}$ of distinct vertices, such that there exists $e \in E'$ with $\{x_1,x_2,\ldots,x_{k-1}\} \subseteq e$.
To simplify the notation, we will often identify a sequence $\bx$ with the set of its elements, e.g., we will write $\bx \subseteq e$. Also, for $x_0 \in V$, by $x_0\bx$ we mean the sequence $x_0x_1x_2\ldots x_{k-1}$.

For $\by, \bx \in E_{k-1}$, we say that a tight walk (respectively trail) is a tight {\em $\by$-$\bx$-walk} (resp., {\em $\by$-$\bx$-trail}) if it starts with the prefix $\by$ and ends with the suffix $\bx$ (note that the prefix and the suffix may overlap, in particular, we treat $\bx$ as a tight $\bx$-$\bx$-walk of length 0). Clearly every tight Euler trail in $H$ is a tight Euler $\bx$-$\by$-trail for some $\bx,\by \in E_{k-1}$.
Moreover, a tight Euler tour is a tight Euler $\bx$-$\bx$-trail for some $\bx \in E_{k-1}$.

The following version of the well-known inclusion-exclusion principle is the main tool that we use (see e.g. \cite[Theorem 4.2]{FK-Exact}).

\begin{theorem}[Inclusion-exclusion principle, intersection version]\label{inc-ex}
Let $N$ be a finite set and let $\{ Q_i \colon i \in I\}$ be a family of its subsets.
Then
\[
\left |\bigcap_{i \in I} Q_i \right | = \sum_{W \subseteq I} (-1)^{|W|}  \left | N \setminus \bigcup_{w \in W} Q_w \right |.
\]
\end{theorem}

Now we are ready to state the main result of this section.

\begin{theorem}
For every fixed $k \geqslant 3$, the number of (a) tight Euler trails, (b) tight Euler tours in a $k$-uniform hypergraph with $m$ edges can be computed in time $2^m \cdot m^{O(1)}$, using polynomial space.
\end{theorem}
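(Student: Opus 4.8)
The plan is to follow the classical inclusion--exclusion algorithm for counting Hamiltonian cycles, with the role of ``vertices that must be visited'' played by ``edges that must be used.'' First I would make the state space explicit: the nodes are the $(k-1)$-sequences of distinct vertices, i.e.\ the elements of $E_{k-1}$, and I build an auxiliary directed graph $G$ having an arc from $s_1s_2\ldots s_{k-1}$ to $s_2\ldots s_{k-1}t$ labelled by the set $\{s_1,\ldots,s_{k-1},t\}$ whenever this set is an edge of $H$. A tight $\bx$-$\by$-walk of length $t$ then corresponds bijectively to a walk of $t$ arcs from node $\bx$ to node $\by$ in $G$, and the labels along it are exactly the edges traversed, counted with multiplicity. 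The key observation is that a tight walk of length exactly $m=|E(H)|$ that uses every edge at least once must use every edge exactly once, hence is a tight Euler trail; this is what lets me recast the Eulerian condition as an intersection of the events ``edge $e$ is used.''

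Fixing an ordered pair $\bx,\by\in E_{k-1}$, let $N$ be the finite set of all tight $\bx$-$\by$-walks of length $m$ (an edge may be traversed several times), and for each $e\in E(H)$ let $Q_e\subseteq N$ consist of those walks that traverse $e$. By the observation above, $\bigcap_{e\in E(H)}Q_e$ is precisely the set of tight Euler $\bx$-$\by$-trails. On the other hand, for $W\subseteq E(H)$ the set $N\setminus\bigcup_{w\in W}Q_w$ consists of the length-$m$ tight $\bx$-$\by$-walks avoiding every edge of $W$, that is, of walks of $m$ arcs in the subgraph of $G$ obtained by deleting all arcs whose label lies in $W$. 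Writing $g_{\bx,\by}(F)$ for the number of length-$m$ tight $\bx$-$\by$-walks all of whose edges lie in $F\subseteq E(H)$, Theorem~\ref{inc-ex} gives
\[
\bigl|\{\text{tight Euler }\bx\text{-}\by\text{-trails}\}\bigr|=\sum_{W\subseteq E(H)}(-1)^{|W|}\,g_{\bx,\by}\bigl(E(H)\setminus W\bigr).
\]

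Each quantity $g_{\bx,\by}(F)$ is cheap, being a count of walks of prescribed length in a fixed directed graph. A single dynamic program suffices: starting from the indicator of $\bx$, I propagate for $t=0,1,\ldots,m$ the vector whose $\bar s$-entry is the number of length-$t$ tight $\bx$-$\bar s$-walks using only edges of $F$, each step being one sparse matrix--vector product over the arcs of $G$ labelled within $F$. Since only the $O(m)$ vertices lying in some edge are relevant, the number of states is $O(m^{k-1})$, polynomial for fixed $k$, so every $g_{\bx,\by}(F)$ is computed in time and space polynomial in $m$. Running over all $2^m$ subsets $W=E(H)\setminus F$, adding the signed contributions, and keeping only a running total yields, for a fixed pair $(\bx,\by)$, the number of tight Euler $\bx$-$\by$-trails in time $2^m\cdot m^{O(1)}$ and polynomial space. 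For part~(a) I sum these counts over the polynomially many ordered pairs $\bx,\by\in E_{k-1}$; for part~(b) I restrict to $\bx=\by$, obtaining the number of closed tight Euler $\bx$-$\bx$-trails, and divide the grand total by $2m$ to compensate for the $2|E(H)|$ rotated and reflected copies of each tour identified in Remark~\ref{symetrie}.

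The one point that requires care is certifying the two bounds simultaneously. I would check that every integer manipulated---the individual walk counts $g_{\bx,\by}(F)$ and the accumulating signed sum---has only polynomially many bits, so that arithmetic stays polynomial, and that the outer enumeration visits each of the $2^m$ edge subsets exactly once while storing nothing beyond the current subgraph, the current dynamic-programming vector and the running total. With this bookkeeping in place the exponential cost resides entirely in the $2^m$ iterations; the reduction of the Eulerian condition to an intersection, the application of Theorem~\ref{inc-ex}, and the walk-counting recurrence are all routine.
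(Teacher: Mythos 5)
Your proposal is correct and follows essentially the same route as the paper: inclusion--exclusion over edge subsets (Theorem~\ref{inc-ex}) applied to length-$m$ tight $\bx$-$\by$-walks, with the walk counts obtained by a polynomial-space dynamic program over $E_{k-1}$, summed over all pairs for trails and over $\bx=\by$ with division by $2m$ for tours. The only cosmetic difference is your explicit auxiliary digraph formulation and a looser (but still polynomial) $O(m^{k-1})$ bound on the number of states where the paper uses $O(m\cdot k!)$.
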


\begin{proof}
Fix $\by = y_1y_2\ldots y_{k-1} \in E_{k-1}$ and $\bx = x_1x_2\ldots x_{k-1} \in E_{k-1}$. So, following the notation used in Theorem \ref{inc-ex}, define $N$ to be the set of all tight $\by$-$\bx$-walks of length $m$ in a $k$-uniform hypergraph $H$ with $m$ edges. Moreover, let $I=E$ be the set of edges of $H$. For an edge $e$, let $Q_e$ denote the set of tight  walks in $N$, which contain the edge $e$. Note that for $W \subseteq E$, $N \setminus \bigcup_{w \in W} Q_w$ is the set of tight $\by$-$\bx$-walks of length $m$ in $H' = (V,E \setminus W)$.
Also, if $W=E$, then $\bigcap_{e \in W} Q_e$ is the set of all tight Euler $\by$-$\bx$-trails in $H$.

Suppose that for every $W \subseteq E$ we can compute $|N \setminus \bigcup_{w \in W} Q_w|$ in time $m^{O(1)}$. Then, by Theorem \ref{inc-ex}, we can compute $|\bigcap_{e \in E} Q_e|$ in time $2^m \cdot m^{O(1)}$.

So, let us focus on computing $|N \setminus \bigcup_{w \in W} Q_w|$.
Let $W$ be a fixed subset of $E$ and let $H':=(V,E')$, where $E' = E \setminus W$. For $d \in \{0,1,\ldots,|E'|\}$ and $\bx' = x_1'x_2'\ldots x_{k-1}' \in E'_{k-1}$, by $T[d,\bx']$ we denote the number of tight $\bx$-$\bx'$-walks of length $d$ in $H'$. Note that if $\bx \notin E'_{k-1}$, then $T[d,\bx']=0$ for all $d$ and $\bx'$.

It is straightforward to verify that $T[d,\bx']$ is given by the following recursion:

\begin{align*}
T[0,\bx'] = & \begin{cases}
1 & \text{ if } \bx' = \bx \in E'_{k-1},\\
0 & \text{otherwise,}
\end{cases}\\
T[d,\bx'] = & \sum_{ \substack{ x_0' \in V, s.t. \\ x_0'\bx' \in E'}} T[d-1,x_0'x_1'\ldots x_{k-2}'] \text{ for } d \geqslant 1.
\end{align*}

Applying this recursive formula, we can use dynamic programming to compute all values $T[d,\bx']$ for $d \in \{0,1,\ldots,|E'|\}$ and $\bx' \in E'_{k-1}$. The size of the dynamic programming table is $O(m \cdot m \cdot k!) = O(m^2)$ and computing each entry takes $O(m \cdot k)=O(m) $ time. So, all values of $T[d,\bx']$ can be computed in time $O(m^3)$ and space $O(m^2)$. Thus, $|\bigcap_{i \in [m]} Q_i|$ can be computed in time $2^m \cdot m^{O(1)}$.

To compute the number of tight Euler trails in the input hypergraph $H$, we need to call the algorithm described above for all $\by,\bx \in E_{k-1}$, i.e., at most $(m \cdot k!)^2 = O(m^2)$ times, and then return the sum of obtained numbers. It is clear that each tight Euler trail will be counted exactly once.

Recall that a tight  Euler tour in $H$ corresponds to exactly $2m$ closed tight Euler trails (i.e.,  $\bx$-$\bx$ tight Euler trails for some $\bx \in E_{k-1}$). Thus to compute the number of tight Euler tours, we need to call the algorithm for each $\bx \in E_{k-1}$ (at most  $m \cdot k! = O(m)$ times), and return the sum of obtained numbers, divided by $2m$.

\end{proof}

\section{Final remarks and open problems}

In this paper we proved that the problem of deciding if a given 3-uniform hypergraph has a tight Euler tour is NP-complete. We believe the same problem for $k$-uniform hypergraphs with $k>3$ is NP-complete too, however, we do not have a proof of this statement. Our proof for $k=3$ does not seem to generalize easily to larger values of $k$.

We think the codegree (defined as the number of edges containing a $(k-1)$-tuple of vertices) is a more natural parameter than the degree in problems concerning tight cycles or tours in $k$-uniform hypergraphs. In fact this is the parameter that appears in the Dirac type theorem on existence of a tight Hamilton cycle in a 3-uniform hypergraph (see \cite{RRS2}). We observed (Theorem \ref{wykreconetorusy}) that the problem of deciding if a 3-uniform hypergraph has a tight Euler tour is NP-complete even for hypergraphs of maximum codegree bounded by $4$. It is not hard to see that the problem becomes polynomial for $3$-uniform hypergraphs with maximum codegree bounded by $2$. It follows from an observation that any tour in such hypergraphs is uniquely determined by any two of its consecutive edges. It is an open problem what the complexity status of our problem is for $3$-uniform hypergraphs of maximum codegree bounded by $3$.

%\newpage

\end{document}